\newtheorem{proposition}{Proposition}
\renewcommand{\c}[1]{\mathcal{#1}}
\newcommand{\bb}[1]{\mathbb{#1}}
\renewcommand{\v}[1]{\boldsymbol{#1}}
\newcommand{\m}[1]{\mathbf{#1}}
\newcommand{\dnet}{d_\mathrm{net}}
\newcommand{\Var}{\mathrm{Var}}
\newcommand{\g}{\leftarrow}
\newcommand{\RE}{\mathrm{RE}}
\newcommand{\WNRV}{\mathrm{WNRV}}
\begin{document}

\begin{center}
{\Large\bf
 Reliability Estimation for Networks with \\[5pt] Minimal Flow Demand and Random Link Capacities\\[15pt]}

{\bf Zdravko I. Botev}, The University of New South Wales, Sydney, Australia \\
{\bf Pierre L'Ecuyer}, Universit\'e de Montr\'eal, Canada, and Inria--Rennes, France \\
{\bf Bruno Tuffin}, INRIA Rennes---Bretagne Atlantique, France

\end{center}

{\bf Abstract: }
We consider a network whose links have random capacities and in which a certain target amount
of flow must be carried from some source nodes to some destination nodes.
Each destination node has a fixed demand that must be satisfied and each source node has a given supply.
We want to estimate the unreliability of the network, defined as the probability that 
the network cannot carry the required amount of flow
to meet the demand at all destination nodes.
When this unreliability is very small, which is our main interest in this paper, 
standard Monte Carlo estimators become useless because failure to meet the demand is a rare event.
We propose and compare two different methods to handle this situation,
one based on a conditional Monte Carlo approach and the other based on generalized splitting.
{We find that the first is more effective when the network is highly reliable and  
not too large, whereas for a larger network and/or moderate reliability, the second is more effective.}

{\bf Keywords:}
{network reliability},
{stochastic flow network},
{Conditional Monte Carlo},
{permutation Monte Carlo},
{generalized splitting}


\section{Introduction}

Network reliability estimation problems are commonplace in various application areas 
such as transportation, communication, and power distribution systems; see for example \cite{pGER10a}.
In many of those problems, the states of certain network components are subject to uncertainty 
and there is a set of conditions under which the network is operational,
and one wishes to estimate the network \emph{unreliability}, defined as the probability $u$
that the network is in a failed state (i.e., is not operational).
When $u$ is very small, a standard (crude) Monte Carlo (MC) approach that merely generates 
the component states, computes the indicator function that the network is operational or not,
and averages over $n$ independent runs to estimate $u$, is unsatisfactory because 
the relative error (defined as the standard deviation of the estimator divided by the expected value $u$)
of the MC estimator goes to infinity when $u\to 0$.

One reliability problem that has received a lot of attention is the \emph{static network reliability} 
estimation problem, in which each link of the network is failed with a given probability 
and the network is operational when a given (specific) subset of the nodes are all connected.
Effective estimation methods have been developed for this problem when $u$ is small;
see \cite{vBOT13a,vBOT16m,vELP91a,vFIS86a,pGER10a,vLOM99a} and the references therein.
The model considered in this paper is more general.
Instead of having only a binary state (up or down), each link has a random capacity
that can take many possible values, 
there is a fixed demand that must be satisfied at certain nodes (called the destination nodes),
a fixed supply is available at some other nodes (the source nodes),
and the network is operational when it can carry the flow to satisfy all the demands.
As a special case, there can be a single source node and a single destination node, with a fixed demand,
and the network is operational when the maximum flow that can be sent from the source to the destination 
reaches the demand. We will describe our methods in this particular setting to simplify the notation,
but the methods apply to the general setting as well.
The case of links with binary states is a special case.
The several methods developed for this special case do not readily apply to the 
network flow setting considered here, but we show how two of the best available methods for the binary case,
\emph{permutation Monte Carlo} (PMC) and \emph{generalized splitting} (GS),
can be adapted to this problem.  
The adaptation is not straightforward.

The PMC method \cite{vELP91a,pGER10a,vLOM99a} constructs an artificial continuous-time Markov chain (CTMC) defined as follows.
Each capacity is assumed to have a discrete distribution over a finite set of possible values.
This can approximate a continuous distribution if needed.
We assume that all links start at their minimal capacity, and the capacity of one link may increase
each time the CTMC has a jump.  The CTMC is constructed so that the probability that the network 
is failed at time 1 is equal to $u$.
PMC generates the discrete-time Markov chain (DTMC) underlying the CTMC,
i.e., only the sequence of states that are visited until the network is operational,
and conditional on that sequence it computes the probability that the network is failed at time 1,
as an estimator of $u$.
This conditional probability can be computed by exploiting the property that 
the failure time has a phase-type conditional distribution,
whose cumulative distribution function (cdf) and density can be expressed in terms of matrix exponentials.
We show how to adapt and apply the PMC principle to our problem.
The CTMC construction is quite different than for the binary case.
We also prove, under certain conditions, that the resulting PMC estimator has 
bounded relative error (BRE) when $u \to 0$ for a given network.

GS \cite{vBOT10a} is a rare-event estimation method where the rare event is the intersection of a nested sequence of
events and its probability is the product of conditional probabilities. Each conditional probability is estimated thanks to resampling strategies, making the overall estimation  more accurate than a direct estimation of the rare-event probability itself.  The application of GS to this problem was discussed in \cite{vBOT14a} for the situation
in which the capacities have a continuous distribution, and experimental 
results were reported for a small example.
But the GS algorithm proposed there does not work in general when the capacities
have a discrete distribution.
We show however that GS can be applied in the discrete case if we combine it with the same
CTMC construction as for PMC.
The GS algorithm does not have BRE in the asymptotic regime when $u\to 0$,
but it becomes more efficient that PMC when the size of the network increases.
The relative error typically increases (empirically) as $\c O(-\log u)$.

The remainder of the paper is organized as follows. 
In Section~\ref{sec:model}, we formulate the network flow model considered in this paper.
In Section~\ref{sec:pmc}, we construct a CTMC which permits one to apply PMC to this model,
for the case where each capacity is distributed over a finite set.
In Section~\ref{sec:gs}, we explain how to apply GS to this model.
We report numerical experiments in Section~\ref{sec:examples}.
{Our experimental results agree with the fact that PMC has BRE when $u\to 0$,
under appropriate conditions.  It can accurately estimate extremely small values of $u$
when the network is not too large.
When the network gets larger and $u$ is not too small, on the other hand, 
GS becomes more effective than PMC.}

\section{The model}
\label{sec:model}

Let $\c G=(\c V,\c E)$ be a graph with a set of nodes $\c V$ and a set of links $\c E$ with cardinality $m=|\c E|$.
For $i=1,\dots,m$, link $i$ has a random integer-valued flow capacity $X_i$ with discrete marginal 
distribution
$
  p_i(x) = \bb P[X_i=x] 
$
over the set
\[
  \c X_i = \{c_{i,0},\ldots,c_{i,b_i}\}, \qquad  0\le c_{i,0} < c_{i,1} < \cdots < c_{i,b_i} < \infty.
\]
This is a standard assumption; see \cite{vBOT14a} and the references therein.
Thus, the random network state $\v X=(X_1,\ldots,X_m)$ belongs to the space $\c X=\prod_{i=1}^m\c X_i$ 
and has joint pdf $p(\v x)=\bb P(\v X=\v x)$, for $\v x\in\c X$. 
We also make the standard independence assumption (see \cite{vALE92a,vBUL02a,sCHO14a})  
that $p(\v x)=\prod_{i=1}^m \bb P[X_i=x_i]$ and that the nodes do not fail.

To keep the notation and the exposition simple, in the remainder of the paper
we describe the model and the methods under the assumption that there is a single source
and a single destination.  
The generalization to multiple sources and destinations is straightforward, as explained below.
The fixed demand level at the destination is $d_\mathrm{net}>0$ 
and the maximum flow that can be carried from the source to the destination is a random variable $\Psi(\v X)$,
which is a function of the link capacities.
The well-known max-flow min-cut theorem says that the maximum value of a flow from a source to a destination is equal to the minimum capacity of a cut  in the network. 
Efficient algorithms are available to compute $\Psi(\v X)$; for example the Ford-Fulkerson algorithm.

We are interested in estimating the unreliability of the flow network, defined here as 
\[
  u = \bb P[\Psi(\v X) < d_\mathrm{net}] = \sum_{\{\v x\in\c X : \Psi(\v x) < d_\mathrm{net}\}} p(\v x);
\]
that is, the probability that the maximum flow $\Psi(\v X)$ fails to meet the demand. 
This problem was considered in \cite{sFIS96a}, for example. 
In the particular case where $\c X_i = \{0,1\}$ for each $i$ and $d_\mathrm{net} = 1$,
we have an instance of the static network reliability problem mentioned in the introduction,
with the source and destination as the selected set of nodes to be connected.

To generalize to multiple sources and destinations, we would assume a fixed demand $d_i$ at
each destination node $i$, a fixed supply $s_i$ at each source node $i$, 
and the event $\{\Psi(\v X) < d_\mathrm{net}\}$ would be replaced by the event that 
the network does not have sufficient capacity to send flow to satisfy all the demands from the available supplies. 

For small networks, it is possible to compute and store most of  
the  minimal cutsets or pathsets and use them to obtain exact or approximate values for $u$; 
see \cite{pJAN10a,pZUO07a} for example.  
But for large networks, no polynomial-time algorithm is known for computing $u$ exactly \cite{pCOL87a},  
and one must rely on approximations or on estimation via Monte Carlo.
Of particular interest is the situation in which the network is highly reliable, 
i.e., $u$ is a very small rare-event probability, 
because crude Monte Carlo then becomes ineffective.   

Several Monte Carlo variance-reduction methods have been proposed for network reliability estimation 
in rare-event situations;
see, e.g., \cite{vBOT13a,vBOT16m,vCAN03a,vELP91a,pGER10a,vRAM05a,vLEC11c,vTUF14a} 
and the references given there.
Most of these methods are for the special case of independent links with binary states 
and nodes that never fail.
Some have been extended to links with three possible states 
\cite{pGER14c,pGER14b,pGER14a}, but this remains restrictive.
We now describe how two of the most efficient methods, PMC and GS, can be adapted to our model (an archived version of this report: \cite{botev:hal-01745187}).

\section{Reformulating the model as a CTMC and applying PMC}
\label{sec:pmc}

We now show how to construct an artificial CTMC for this static model,
which will permit us to apply PMC as described in the introduction.
This CTMC construction differs from that used in \cite{vBOT16m,pGER10a}.

\subsection{Constructing the CTMC}
\label{sec:pmc-ctmc}

For each $i$, let $\bb P(X_i=c_{i,k}) = p_i(c_{i,k}) = r_{i,k} > 0$ for $k=0,\dots,b_i$.
Define independent exponential random variables $Y_{i,1},\ldots,Y_{i,b_i}$
with rates $\lambda_{i,1},\ldots,\lambda_{i,b_i}$, respectively, where the $\lambda_{i,k}$
still have to be chosen.
Suppose that the capacity of link $i$ is 
$c_{i,0}$ from time $T_{i,0}=0$ to time $T_{i,1} = \min(Y_{i,1}, \ldots, Y_{i,b_i})$ (exclusive), 
after that it is $c_{i,1}$ from time $T_{i,1}$ to time $T_{i,2} = \min(Y_{i,2}, \ldots, Y_{i,b_i})$, 
it is $c_{i,2}$ from time $T_{i,2}$ to time $T_{i,3} = \min(Y_{i,3}, \ldots, Y_{i,b_i})$, 
and so on, and finally it is $c_{i,b_i}$ from time $T_{i,b_i}$ to $T_{i,b_i+1} = \infty$.
Under this process, the capacity of link $i$ at time $\gamma\ge 0$ is given by 
\begin{eqnarray}
  X_i(\gamma) &=& c_{i,k}  \quad \mbox{ for } T_{i,k} \le\gamma < T_{i,k+1} \mbox{ and } 0\le k\le b_i \\
     &=& \max_{k}\{c_{i,k} : T_{i,k} \le\gamma\}.
 \label{eq:Xi-capacity}
\end{eqnarray}
The times $T_{i,1}, \ldots, T_{i,b_i}$ are not necessarily all distinct; often, many of them are equal,
so that the number of jumps at which the capacity changes can be much smaller than $b_i$.
For example, if $T_{i,1} = Y_{i,b_i}$, then we have $T_{i,1} = T_{i,2} = \cdots = T_{i,b_i}$.
As another example, if $b_i=3$ and $Y_{i,2} < Y_{i,1} < Y_{i,3}$, then 
$0 < T_{i,1} = T_{i,2} < T_{i,3}$ and the capacity of link $i$ jumps 
from $c_{i,0}$ to $c_{i,2}$ at time $T_{i,2} = Y_{i,2}$ and jumps again
from $c_{i,2}$ to $c_{i,3}$ at time $T_{i,3} = Y_{i,3}$.
In general, the process $\{X_i(\gamma),\, \gamma\ge 0\}$ has an upward jump 
at each of the \emph{distinct} jump times $T_{i,k}$.

To show that this process is a CTMC, suppose that we are at time $\gamma\ge 0$ and $X_i(\gamma) = c_{i,k}$.
Then we know that $Y_{i,k} \le\gamma$ and that $Y_{i,\ell} > \gamma$ for all $\ell > k$.  
The $Y_{i,\ell}$ for $\ell < k$ can be anything, but they have no influence on the process trajectory
after time $\gamma$. This means that the current state $X_i(\gamma)$ contains all the relevant information
that needs to be known at time $\gamma$ to generate the future of the process.

The capacity $X_i(\gamma)$ of link $i$ at time $\gamma\ge 0$ satisfies
\[
  \bb P[X_i(\gamma) \le c_{i,k}] = \bb P[\min(Y_{i,k+1}, \ldots,  Y_{i,b_i}) > \gamma] 
    = \exp[-\gamma (\lambda_{i,k+1} + \dots + \lambda_{i,b_i})].
\]
If we select the $\lambda_{i,k}$'s so that the last expression equals 
$r_{i,0} + \cdots + r_{i,k}$ for each $k$ when $\gamma=1$, 
then $X_i(1)$ has the exact same distribution as $X_i$,
the capacity of link $i$ in the original static model.
This is equivalent to having
\[
  \lambda_{i,k+1} + \dots + \lambda_{i,b_i} = - \ln (r_{i,0} + \cdots + r_{i,k}).
\]
To achieve this, it suffices to put 
\begin{eqnarray}
 \lambda_{i,b_{i}}    &=& - \ln (r_{i,0} + \cdots + r_{i,b_i-1}) ~=~ - \ln(1-r_{i,b_i})
 \label{eq:lambdaibi}
\end{eqnarray}
and then for $k = b_i-1,\ldots, 1$ (in descending succession): 
\begin{eqnarray}
  \lambda_{i,k} 
   &=& - \ln (r_{i,0} + \cdots + r_{i,k-1}) - \lambda_{i,k+1} - \dots - \lambda_{i,b_i}.
 \label{eq:lambdaik}
\end{eqnarray}
Note that (\ref{eq:lambdaik}) can be rewritten as
$\lambda_{i,k} = - \ln (r_{i,0} + \cdots + r_{i,k-1}) + \ln (r_{i,0} + \cdots + r_{i,k})$,
which can never be negative.  We have proved the following.
\begin{proposition}
If we select $\lambda_{i,b_i}, \lambda_{i,b_i-1}, \ldots, \lambda_{i,1}$ according to 
(\ref{eq:lambdaibi}) and (\ref{eq:lambdaik}) and the process $X_i(\cdot)$ as in (\ref{eq:Xi-capacity}), 
then $\lambda_{i,k} \ge 0$ for each $k$, $\{X_i(\gamma),\, \gamma\ge 0\}$ is a CTMC process, and 
$X_i(1)$ has exactly the same discrete distribution as the capacity of link $i$ in the original model:
$\bb P[X_i(1) = c_{i,k}] = r_{i,k}$ for $k = 0,\dots,b_i$.
As a result, $\v X(1) = (X_1(1),\dots,X_m(1))$ has the same distribution as $\v X$
and one has $u = \bb P[\Psi(\v X(1)) < d_{\rm net}]$.
\end{proposition}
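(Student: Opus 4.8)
The plan is to establish the three claims of the proposition in sequence, since each builds naturally on the algebraic setup already laid out in the text.

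\textbf{Step 1: Nonnegativity of the rates.} First I would verify $\lambda_{i,k}\ge 0$ for all $k$. For $k=b_i$ this is immediate from (\ref{eq:lambdaibi}): since $0 < 1-r_{i,b_i} \le 1$ (using $r_{i,b_i} > 0$ and $r_{i,b_i}\le 1$), we get $-\ln(1-r_{i,b_i})\ge 0$. For $1\le k\le b_i-1$, I would use the telescoped form noted just before the proposition, namely $\lambda_{i,k} = \ln(r_{i,0}+\cdots+r_{i,k}) - \ln(r_{i,0}+\cdots+r_{i,k-1})$, and observe that the argument of the first logarithm is no smaller than that of the second (they differ by $r_{i,k}\ge 0$), while both are strictly positive because $r_{i,0}+\cdots+r_{i,k-1}\ge r_{i,0} > 0$; monotonicity of $\ln$ then gives $\lambda_{i,k}\ge 0$. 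This also confirms the rates are finite, so the $Y_{i,k}$ are genuine (a.s.\ finite, positive) exponential random variables.

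\textbf{Step 2: Markov property.} Next I would argue that $\{X_i(\gamma),\,\gamma\ge 0\}$ is a CTMC. The text already gives the essential observation: if $X_i(\gamma)=c_{i,k}$, then $Y_{i,\ell}>\gamma$ for every $\ell>k$ while $Y_{i,1},\dots,Y_{i,k}$ are all $\le\gamma$ and play no further role, since $X_i(\cdot)$ for times after $\gamma$ depends only on $\min(Y_{i,k+1},\dots,Y_{i,b_i})$ and its successive sub-minima. By the memorylessness of the exponential distribution, the conditional law of $(Y_{i,k+1}-\gamma,\dots,Y_{i,b_i}-\gamma)$ given $\{Y_{i,\ell}>\gamma \text{ for }\ell>k\}$ is again a vector of independent exponentials with the same rates $\lambda_{i,k+1},\dots,\lambda_{i,b_i}$, and in particular does not depend on the past trajectory $\{X_i(s):s\le\gamma\}$ beyond the current value $c_{i,k}$. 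Hence the future evolution depends on the past only through $X_i(\gamma)$, which is the Markov property; the process is a pure-jump chain on the finite totally ordered state space $\c X_i$ with only upward transitions. I would keep this at the level of the memorylessness argument rather than writing out generator computations.

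\textbf{Step 3: Marginal distribution at time 1, and the conclusion.} Then I would confirm the distributional identity. From the displayed computation in the text, for each $k$, $\bb P[X_i(1)\le c_{i,k}] = \bb P[\min(Y_{i,k+1},\dots,Y_{i,b_i})>1] = \exp[-(\lambda_{i,k+1}+\cdots+\lambda_{i,b_i})]$, and by construction (via (\ref{eq:lambdaibi})--(\ref{eq:lambdaik}), which telescope to $\lambda_{i,k+1}+\cdots+\lambda_{i,b_i} = -\ln(r_{i,0}+\cdots+r_{i,k})$) this equals $r_{i,0}+\cdots+r_{i,k}$, the cdf of $X_i$ at $c_{i,k}$; taking successive differences gives $\bb P[X_i(1)=c_{i,k}]=r_{i,k}$. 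Since the processes $X_1(\cdot),\dots,X_m(\cdot)$ are built from independent families $\{Y_{i,k}\}$, the vector $\v X(1)$ has independent components with the correct marginals, hence the same joint law as $\v X$ (using the independence assumption on the original model stated in Section~\ref{sec:model}). Applying the measurable function $\Psi$ and the event $\{\,\cdot < \dnet\}$ to two equal-in-distribution random vectors yields $\bb P[\Psi(\v X(1)) < \dnet] = \bb P[\Psi(\v X) < \dnet] = u$.

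I do not anticipate a genuine obstacle here; the only point requiring a little care is Step 2, where the claim ``the past values $Y_{i,\ell}$, $\ell<k$, have no influence'' must be phrased precisely as a statement about conditional independence, and one should note that the relevant $\sigma$-algebra generated by $\{X_i(s):s\le\gamma\}$ is coarser than that generated by all the $Y_{i,\ell}$ — so the Markov property is a conditional-law statement, not merely the memorylessness of a single exponential. Everything else is the routine telescoping already carried out in the text.
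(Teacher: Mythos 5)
Your proof is correct and follows essentially the same route as the paper, whose ``proof'' is just the derivation preceding the statement: telescoping the rates to get $\lambda_{i,k}\ge 0$, memorylessness of the exponentials for the CTMC property, and matching $\bb P[X_i(\gamma)\le c_{i,k}]=\exp[-\gamma(\lambda_{i,k+1}+\cdots+\lambda_{i,b_i})]$ to the target cdf at $\gamma=1$. One minor slip in your Step 2: when $X_i(\gamma)=c_{i,k}$ you assert that $Y_{i,1},\dots,Y_{i,k}$ are all $\le\gamma$, whereas only $Y_{i,k}\le\gamma$ is implied and the $Y_{i,\ell}$ with $\ell<k$ can be anything (as the paper itself notes); this is inconsequential since your argument never uses that claim, relying only on the conditioning event $\{Y_{i,\ell}>\gamma,\ \ell>k\}$ and the irrelevance of the earlier variables.
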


\subsection{Applying PMC}
\label{sec:network-flow-pmc}

Under the assumption that all links are independent, a simple way of applying PMC to this model is as follows.
Generate all the $Y_{i,k}$'s independently with their rates $\lambda_{i,k}$, 
put them in a large vector 
\[
 \v Y = (Y_{1,1},\dots,Y_{1,b_1}, \ldots, Y_{m,1},\dots,Y_{m,b_m}) 
\]
of size $\kappa = b_1 + \cdots + b_m$, and sort this vector in increasing order to obtain 
\[
  Y_{\pi(1)}\le Y_{\pi(2)} \le \dots \le Y_{\pi(\kappa)},
\]
where $\pi(j) = (i,k)$ if $Y_{i,k}$ is in position $j$ in the sorted vector,
so that $\pi = (\pi(1),\dots,\pi(\kappa))$ can be seen as the permutation of the pairs $(i,k)$ 
that corresponds to the sort.  This permutation gives an ordering of the $\kappa$ pairs $(i,k)$.
When scanning those pairs in the given order, each pair $(i,k)$ corresponds to a potential capacity increase for link $i$.
The capacity increases if and only if no pair $(i,k')$ for $k' > k$ has occurred before.
Conditional on $\pi$, one can add those pairs in the given order and update the capacities accordingly,
until the maximum flow in the network reaches $d_{\mathrm{net}}$.
Suppose this occurs when adding the pair $(i,k) = \pi(C)$ for some integer $C > 0$.
Let $T_C = Y_{\pi(C)}$.
The (unbiased) conditional (PMC) estimator of $u$ is then 
\[
  \bb P[\Psi(\v X(1)) < \dnet \mid \pi]
	= \bb P[T_C > 1 \mid \pi] 
  = \bb P[T_C > 1 \mid \pi(1),\dots,\pi(C)] 
  = \bb P[A_1 + \cdots + A_{C} > 1],
\]
where $A_1 = Y_{\pi(1)}$ is an exponential random variable with rate 
$\Lambda_1 = \sum_{i=1}^m \sum_{k=1}^{b_i} \lambda_{i,k}$,
each $A_j = Y_{\pi(j)} - Y_{\pi(j-1)}$ is an exponential random variable with rate 
$\Lambda_j = \Lambda_{j-1} - \lambda_{\pi(j-1)}$ for $j = 2,\dots,C$, and these $A_j$'s are independent.
Given $\pi$ and $C$, $T_C = A_1 + \cdots + A_{C}$ is the sum of $C$ independent exponential 
random variables with rates $\Lambda_1,\dots,\Lambda_C$, which has a phase-type distribution,
whose complementary cdf is given by
\begin{equation}
\label{eq:sum-cdf}
 1-F(\gamma \mid\pi) = \bb P[T_C > \gamma \mid \pi] = \v e_1^\top \exp(\m Q \,\gamma)\v 1
\end{equation}
where $\v e_1^\top = (1,0,\dots,0)$, $\v 1 = (1,\dots,1)^\top$ (the $^\top$ means ``transposed''), and
\[
  \m Q = \begin{pmatrix}
    -\Lambda_1  & \Lambda_1   &    0      & \cdots         &  0      &  \\
         0      &  -\Lambda_2 & \Lambda_2 &   0 	         &  \vdots &  \\
         0      &      0      &   \ddots  &  \ddots        &  0      &  \\
      \vdots    &      0      &      0    & -\Lambda_{C-1} &  \Lambda_{C-1} \\
         0      &  \cdots     &      0    &    0           &  -\Lambda_C \\
	\end{pmatrix}.
\]
Reliable and fast computation of (\ref{eq:sum-cdf}) is discussed in \cite{vBOT13a,vBOT16m}.
%

To compute the critical number $C$ at which the flow reaches the demand,
we must be able to update efficiently the maximum flow in the network each time
we increase the capacity of one link. 
We do this as explained in Section 4 of \cite{vBOT14a}. We refer to this algorithm as the \emph{incremental maximum flow algorithm}.

To estimate $u$ by PMC, for a fixed threshold $\dnet$, we simulate
$n$ independent realizations $W_1,\dots,W_n$ of 
\begin{equation}
\label{eq:cdf-cmc}
  W = W(\pi) = \bb P[T_C > 1 \mid \pi] 
\end{equation}
and take the average $\bar W_n = (1/n)\sum_{i=1}^n W_i$.
Compared with the crude Monte Carlo estimator that would take the indicator
$I = \bb I[\Psi(\v X(1)) < \dnet]$ in place of $W$, it is always true that 
$\Var[W] < \Var[I]$, because $W = \bb E[I \mid \pi]$.

The estimators discussed so far are for a single (fixed) demand $\dnet$.
With PMC, it is also possible to estimate $u = u(\dnet)$ as a function of the demand $\dnet$,
over some interval, using the same simulations for all demands.
To do this, for any given permutation $\pi$, we can compute $C = C(\dnet)$ as a function of the 
demand over the interval of interest.  This would be a step function, often with just a few jumps.
Then we compute $W$ for all values of $C$ that are visited over this interval.
This provides an estimator $W(\dnet)$  of $u(\dnet)$ as a function of $\dnet$. 
By averaging the $n$ realizations of this estimator, we obtain a functional estimator of $u(\dnet)$ 
over the interval of interest.

\subsection{Improved PMC}
\label{sec:pmc-improved}

The PMC strategy described earlier can be improved by removing some useless jumps.
First, whenever $c_{i,k} \ge \dnet$ for $k < b_i$, we can immediately remove all the jumps 
$(i,k+1),\dots,(i,b_i)$, because when the capacity of a link has reached $\dnet$, 
it is useless to increase it further.
Capacity levels larger than $\dnet$ can in fact be all reset to $\dnet$ right away in the model, 
the probability of $\dnet$ in the new model being taken as the  the probability of values larger 
than $\dnet$ in the initial model.
For simplicity, when the demand is fixed, we assume in our algorithm that this has been done already,
so that $c_{i,b_i} \le \dnet$ for all $i$, and then there is no need to remove those useless capacity levels.

Second, the jump times $Y_{i,k}$ that do not change the capacity of link $i$ can also be removed.  
That is, whenever $\pi(j) = (i,k)$ and the capacity of link $i$ has already 
reached a value $c_{i,k'} > c_{i,k}$, i.e., $(i,k') = \pi(j')$ for some $j' < j$, then there is no 
need to consider the pair $(i,k)$ when it is encountered in the permutation, so we can remove the corresponding jump.

Let $\tilde\pi$ be the permutation obtained after removing all those pairs $(i,k)$ from the sorted vector,
and $\tilde C$ the corresponding value of $C$ in this reduced permutation.
As soon as the max flow reaches $d_{\mathrm{net}}$, we have found $\tilde C$.
When we encounter $\tilde\pi(j) = (i,k)$ and the previous capacity of link $i$ was $c_{i,k'} < c_{i,k}$,
the capacity of link $i$ jumps to $c_{i,k}$ and 
we must decrease $\Lambda_j$ by $\tilde\lambda_{i,k} = \lambda_{i,k'+1} + \cdots + \lambda_{i,k}$,
because the jumps that correspond to $(i,k'+1),\dots,(i,k)$ can now be removed from consideration.

\begin{algorithm}
\caption{:  PMC algorithm for multi-state flow network
  \label{algo:pmc-anti}}
\begin{algorithmic}[1]
 \FOR {$i=1$ \TO $m$}
   \STATE{draw $Y_{i,1},\dots,Y_{i,b_i}$ with the appropriate rates $\lambda_{i,k}$}
	 \STATE{let $\c L_i = \{(i,1),\ldots,(i,b_i)\}$}
	 \STATE{${\min} \g (i,b_i)$ and $\tilde\lambda_{i,b_i} \g \lambda_{i,b_i}$}
   \FOR {$k=b_i-1$ \TO $1$}
	   \IF{$Y_{i,k} > Y_{\min}$}
	     \STATE{remove $(i,k)$ from the list $\c L_i$}
		   \STATE{$\tilde\lambda_{\min} \g \tilde\lambda_{\min} + \lambda_{i,k}$}
       \STATE{$S_{i,k} \g 0$}    \COMMENT{ this jump is deactivated}
	   \ELSE
	     \STATE{${\min} \g (i,k)$}   \COMMENT{ this pair $(i,k)$ is retained}
		   \STATE{$\tilde\lambda_{\min} \g \lambda_{i,k}$}
       \STATE{$S_{i,k} \g 1$}    \COMMENT{ this jump is activated}
		\ENDIF
   \ENDFOR
 \ENDFOR
 \STATE{merge the sorted lists $\c L_1,\dots,\c L_m$ into a single list sorted by increasing order, 
        $Y_{\tilde\pi(1)},\ldots,Y_{\tilde\pi(\tilde\kappa)}$}
 \STATE{$\Lambda_1 \g \lambda_{1,1} + \cdots + \lambda_{1,b_1} + \cdots + \lambda_{m,b_m}$}
 \STATE{$j \g 0$}
 \STATE{$\v X \g (c_{1,0},\dots, c_{m,0})$}
 \WHILE{maximum flow $\Psi(\v X) < d_{\mathrm{net}}$}
   \STATE{$j \g j+1$}
   	\IF{$S_{\tilde\pi(j)} = 1$}  
   		\STATE{$(i,k) \g \tilde\pi(j)$}  \COMMENT{this jump has not been removed or executed}
		  \STATE{$\Lambda_{j+1} \g \Lambda_{j} - \tilde\lambda_{i,k}$}
		  \STATE{$S_{i,k} \g 0$}
			\STATE{$X_i \g c_{i,k}$}    \COMMENT{increase capacity of $i$-th link}
		  \STATE{Filter()}    \COMMENT{do nothing (default), or FilterSingle, FilterAll, etc.}
	  \ENDIF
 \ENDWHILE
 \STATE{$\tilde C \g j$}   \COMMENT{the critical jump number}
 \RETURN {$W \g \bb P\left[A_1 + \cdots + A_{k-1} > 1 \mid \tilde\pi, \tilde C\right]$.}
\end{algorithmic}
\end{algorithm}

Algorithm~\ref{algo:pmc-anti} describes this reduced version of PMC in a more formal way.
It returns one realization of $W$.  
Indentation delimits the scope of the loops.
In the first {\bf for} loop, for each link $i$, the algorithm generates the exponential
random variables $Y_{i,k}$ and then immediately eliminates those that correspond to (useless) jump times 
at which the capacity of the link does not change. 
This preliminary filtering is very easy and efficient to apply and may eliminate a significant
fraction of the jumps, especially for links that have many capacity levels. 
The remaining jumps are sorted in a single list (for all links) and each one receives
a Boolean tag $S_{\tilde\pi(j)}$, initialized to 1, which means that this jump is currently scheduled to occur. 

Then these jumps are ``executed'' in chronological order,
by increasing the corresponding capacities, until the critical jump number $\tilde C$ is found.
After that, $W$ can be computed.
The Boolean variables $S_{\tilde\pi(j)}$ are used in the optional Filter() subroutine,
which can be used to try to eliminate further useless jumps 
after a jump is executed and the corresponding capacity is increased (this is discussed in 
Section~\ref{sec:turnip}).
Algorithm~\ref{algo:pmc-anti} would be invoked $n$ times, independently, 
and $u$ would be estimated by the average of the $n$ realizations of $W$.

Other variants of the algorithm can be considered
and some might be more efficient, but this is not completely clear.
For example, instead of generating all the variables $Y_{i,k}$ at the beginning,
one may think of generating the permutation $\pi$ directly without generating those $Y_{i,k}$,
as was done in \cite{pGER10a} for the binary case.
This appears complicated and we did not implement it.

\subsection{Removing jumps having no impact on maximum flow}
\label{sec:turnip}

In Algorithm~\ref{algo:pmc-anti}, in the case where Filter() does nothing, 
all pairs $(i,k)$ for which the capacity of link $i$ increases are retained in $\tilde\pi$
and the corresponding jumps are executed.
But it sometimes occurs that increasing the capacity of link $i$ to $c_{i,k}$ (or more) is useless 
because it can no longer have an impact on the event that the maximum flow exceeds the demand or not. 
In this case, one can cancel (deactivate) all the future jumps related to the capacity of link $i$.
In our implementation, these future jumps are canceled by setting their Boolean variables $S_{i,k}$ to 0.
Increasing the capacity of link $i$ is useless in particular
if it is already possible to send $d_{\mathrm{net}}$ units of flow between
the two nodes connected by link $i$.
This obviously happens if the capacity of link $i$ is already of $d_{\mathrm{net}}$, 
which is trivial to verify, but under our assumption 
(made at the beginning of Section~\ref{sec:pmc-improved})
that a link has no capacity level above $\dnet$, 
this cannot happen, and our algorithm ignores this possibility.

Increasing the capacity of link $i$ is also useless when $d_{\mathrm{net}}$ units of flow can be sent in total,
either directly on link $i$ or indirectly via other links.  
This is generally harder (more costly) to verify.
To detect it, one can run a max-flow algorithm to compute how much flow can be sent between these two nodes.
This can be done each time the capacity of a link is increased.

Algorithm~\ref{algo:filterJumps1} does this only for the link $i$ whose capacity has just been increased, 
at each step $j$.  Since the link $i$ generally changes at every step $j$, we have a different max-flow 
problem (for a different pair of nodes) at each step.  For this reason, in our implementation we recompute the 
max-flow from scratch at each step $j$.  Of course, this brings significant overhead.

Algorithm~\ref{algo:filterJumps2} is even more ambitious: it computes the max-flow between nodes 
for all pairs of nodes.  Then for each link $i$ for which the current max-flow between the corresponding
two nodes meets the demand, it cancels all future jumps associated with that link.
Doing this at each step $j$ might be too costly, so in our implementation the user selects a positive 
integer $\nu$ and does it only at every $\nu$ steps, i.e., when $j$ is a multiple of $\nu$.
We compute the max-flow for all pairs of nodes using the algorithm of \cite{oGUS90a}, 
which is a simplified variant of the Gomory-Hu method \cite{oGOM61a}.
This algorithm computes the all-pairs max-flow by applying $\left|\mathscr{V}\right| - 1$ times 
a max-flow algorithm for one pair of nodes, which is generally more efficient than applying 
a max-flow algorithm for each link.
Algorithm~\ref{algo:filterJumps2} also recomputes the maximum flows from scratch each time it is called,
rather than reusing computations from the previous time and just updating the max flows.
(In fact, the $\left|\mathscr{V}\right| - 1$ pairs of nodes for which the max-flow is computed in the algorithm
change from one call to the next, and we are not aware of an effective incremental algorithm that would
reuse and just update the previous computations.)



\begin{algorithm}
\caption{: FilterSingle
  \label{algo:filterJumps1}}
\begin{algorithmic}
\STATE{$f \g $ compute maximum flow between terminal nodes of link $i$}
\IF{$f \geq d_{\mathrm{net}}$}
	\FOR{$k = 1$ \TO $b_i$}   
		\IF{$S_{i, k} = 1$}
			\STATE{$S_{i, k} \g 0$}
			\STATE{$\Lambda_{j+1} \g \Lambda_{j+1} - \tilde\lambda_{i, k}$}
		\ENDIF
	\ENDFOR
\ENDIF
\end{algorithmic}
\end{algorithm}

\begin{algorithm}
\caption{: FilterAll
  \label{algo:filterJumps2}}
\begin{algorithmic}
\IF {$j \bmod \nu = 0$}
  \STATE{$\left \{F_{v, w}\right\} \g $ max flow between all pairs of nodes $(v,w)$, computed via Gusfield's algorithm}
  \FOR{ all $i = (v,w) \in \mathscr{E}$}
	  \IF{$F_{v, w} \geq d_{\mathrm{net}}$}
		  \FOR{$k = 1$ \TO $b_i$}
			  \IF{$S_{i, k} = 1$}
				  \STATE{$S_{i, k} \g 0$}
				  \STATE{$\Lambda_{j+1} \g \Lambda_{j+1} - \tilde\lambda_{i, k}$}
			  \ENDIF
		  \ENDFOR
	  \ENDIF
  \ENDFOR
\ENDIF
\end{algorithmic}
\end{algorithm}

\subsection{Bounded relative error for PMC}
\label{sec:bre}

For a single run, the crude MC estimator $I = \bb I[\Psi(\v X(1)) < d_\mathrm{net}]$ of $u$,
which is a Bernoulli random variable with mean $u$,
has variance $u(1-u)$, so its relative error (RE) is 
$\RE[I] = \sqrt{u(1-u)}/u \approx u^{-1/2} \to \infty$ when $u\to 0$.
With $n$ runs, the variance is divided by $n$ and the RE by $\sqrt{n}$.
When $u$ is very small, we may need an excessively large $n$ to obtain a sufficiently small RE.
With PMC, the RE is sometimes much better behaved than with MC.
In this section, we obtain conditions under which the PMC estimator $W$ 
has bounded relative error (BRE), i.e., $\RE[W]$ remains bounded when $u\to 0$. 
The proofs have some similarity with those in \cite{vBOT16m}.

Suppose the probabilities $r_{i,k} = r_{i,k}(\epsilon)$ in our model depend on some parameter $\epsilon$ in a way 
that $u = u(\epsilon) \to 0$ when $\epsilon\to 0$.
In what follows, the quantities in the model are assumed implicitly to depend on $\epsilon$.
A non-negative quantity that may depend on $\epsilon$ is $\c O(1)$ if it remains bounded when $\epsilon\to 0$.
It is $\Theta(1)$ if it is bounded and also bounded away from 0, when $\epsilon\to 0$.

In our setting, the vector $\v Y$ and the permutation $\pi$ have finite length $\kappa$ 
and $C$ is bounded by $\kappa$.  The number of possible permutations is therefore finite.
Let $p(\pi)$ be the probability of permutation~$\pi$. 

\begin{proposition}
\label{co:bre-pmc-gen}
(i) If $p(\pi) = \Theta(1)$ for all $\pi$, then the PMC estimator has BRE.\\
(ii) This holds in particular if $\lambda_{i,k_i}/\lambda_{j,k_j} = \Theta(1)$ for all $i, j ,k_i, k_j$
(we then say that the rates are balanced).
\end {proposition}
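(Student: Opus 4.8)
The plan is to bound the second moment $\bb E[W^2]$ and show it is $\c O(u^2)$, so that $\RE[W]^2 = (\bb E[W^2] - u^2)/u^2$ stays bounded as $\epsilon \to 0$. The starting point is the identity $u = \bb E[W] = \sum_\pi p(\pi) W(\pi)$ and the companion identity $\bb E[W^2] = \sum_\pi p(\pi) W(\pi)^2$, where for each $\pi$ the value $W(\pi) = \bb P[T_C > 1 \mid \pi] = \bb P[A_1 + \cdots + A_C > 1]$ depends only on $\pi$ through the critical index $C = C(\pi)$ and the rates $\Lambda_1, \dots, \Lambda_C$. Since there are finitely many permutations and $C \le \kappa$ is bounded, it suffices to get a per-$\pi$ estimate: I want to show that for every $\pi$, $W(\pi) \le \c O(1) \cdot u$, which combined with $W(\pi) \le 1$ and $u = \Theta\bigl(\max_\pi p(\pi) W(\pi)\bigr)$ (or rather $u \ge p(\pi^\star) W(\pi^\star)$ for the dominant $\pi^\star$) will give $\bb E[W^2] = \sum_\pi p(\pi) W(\pi)^2 \le \c O(1)\, u \sum_\pi p(\pi) W(\pi) = \c O(1)\, u^2$.

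First I would establish the key comparison $W(\pi) \le \c O(1)\, u$ under hypothesis (i). The event $\{T_C > 1 \mid \pi\}$ is the conditional probability that, along the scan order dictated by $\pi$, the max flow has not yet reached $\dnet$ at time $1$; equivalently it is $\bb P[\Psi(\v X(1)) < \dnet \mid \pi]$. Summing against $p(\pi)$ recovers $u$, so trivially $p(\pi) W(\pi) \le u$ for each $\pi$, hence $W(\pi) \le u / p(\pi)$. Under (i), $p(\pi) = \Theta(1)$, i.e. $p(\pi) \ge c_0 > 0$ uniformly in $\epsilon$ for the finitely many $\pi$ that have positive probability, so $W(\pi) \le u/c_0 = \c O(1)\, u$. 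This is the crux and it is almost immediate once one notes that $p(\pi)$ being bounded away from zero is exactly what decouples the per-$\pi$ conditional probability from the smallness of $u$. Then $\bb E[W^2] = \sum_\pi p(\pi) W(\pi)^2 \le (u/c_0) \sum_\pi p(\pi) W(\pi) = u^2/c_0$, giving $\RE[W]^2 = \bb E[W^2]/u^2 - 1 \le 1/c_0 - 1 = \c O(1)$, which is BRE. (With $n$ independent runs the variance is divided by $n$, so BRE is preserved.)

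For part (ii), the task is to deduce $p(\pi) = \Theta(1)$ from the balanced-rates condition $\lambda_{i,k_i}/\lambda_{j,k_j} = \Theta(1)$. Here $\pi$ is the permutation sorting the $\kappa$ independent exponentials $Y_{i,k}$ with rates $\lambda_{i,k}$, so $p(\pi)$ is a ratio-of-rates expression: for a permutation listing the pairs as $q_1, q_2, \dots, q_\kappa$, one has the classical formula $p(\pi) = \prod_{j=1}^{\kappa} \lambda_{q_j} / (\lambda_{q_j} + \lambda_{q_{j+1}} + \cdots + \lambda_{q_\kappa})$. Each factor is a ratio of a single rate to a sum of rates; under the balanced assumption every such ratio is bounded above by $1$ and below by a positive constant depending only on $\kappa$ and the uniform bounds on the $\lambda_{i,k_i}/\lambda_{j,k_j}$, hence each $p(\pi)$ is $\Theta(1)$, and (ii) reduces to (i). I would also remark that one should use the \emph{reduced} permutation $\tilde\pi$ and rates $\tilde\lambda$ from Section~\ref{sec:pmc-improved}; since removing jumps only merges exponentials (replacing a rate by a sum of the balanced rates, still $\Theta(1)$ relative to the others), the same argument applies verbatim.

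The main obstacle I anticipate is not in the inequality chain — that part is short — but in handling the $\epsilon$-dependence carefully: one must check that the \emph{set} of permutations with positive probability, and the integer $C(\pi)$, do not change with $\epsilon$ (they are combinatorial objects determined by the network topology and the capacity levels, not by the probabilities $r_{i,k}(\epsilon)$), so that ``finitely many $\pi$, uniform in $\epsilon$'' is legitimate. A secondary point to be careful about: in (i) the hypothesis $p(\pi) = \Theta(1)$ is stated ``for all $\pi$,'' which should be read as ``for all $\pi$ with $p(\pi) > 0$ at the relevant $\epsilon$''; permutations of probability exactly zero contribute nothing to either $u$ or $\bb E[W^2]$ and can be discarded. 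Once these bookkeeping issues are pinned down, the estimate $\bb E[W^2] \le \c O(1)\,u^2$ follows as above.
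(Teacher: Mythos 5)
Your proposal is correct and follows essentially the same route as the paper's proof: part (i) rests on the inequality $u \ge p(\pi)W(\pi)$ to get $W(\pi)/u = \mathcal{O}(1)$ and hence $\mathbb{E}[W^2]/u^2 = \mathcal{O}(1)$, and part (ii) uses the product formula $p(\pi) = \prod_j \lambda_{\pi(j)}/\Lambda_{\pi(j)}$ with each factor $\Theta(1)$ under balanced rates. Your additional bookkeeping remarks (finiteness of the set of permutations uniform in $\epsilon$, and the adaptation to the reduced permutation $\tilde\pi$) are sound elaborations of points the paper treats implicitly.
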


\begin{proof}
(i) Note that $u \ge \bb P[T_C > 1 \mid \pi] p(\pi) = W(\pi) p(\pi)$ for any $\pi$.
If $p(\pi) = \Theta(1)$, then $W(\pi)/u = \c O(1)$ and $\max_{\pi} W(\pi)/u = \c O(1)$.
Therefore $\bb E[W^2/u^2] = \c O(1)$, which implies BRE.

(ii) Note that $p(\pi) = \prod_{j=1}^{\kappa} \lambda_{\pi(j)}/\Lambda_{\pi(j)}$.
Under the given assumption, $\lambda_{\pi(j)}/\Lambda_{\pi(j)} = \Theta(1)$ for all $j$, 
which implies that $p(\pi) = \Theta(1)$ for all $\pi$. 
\end{proof}

As a concrete illustration of an asymptotic regime in which the $r_{i,k}$ depend on $\epsilon$,
we define a regime similar to one that has been widely used for highly reliable Markovian systems 
\cite{vNAK96a,vRUB09b,vSHA94a}.
Suppose that link $i$ is  operating at capacity $c_{i,k}$.  
with probability
$$
 r_{i,k}= a_{i,k} \epsilon^{d_{i,k}}
$$
for some constants $a_{i,k}> 0$ and $d_{i,k}>0$ independent of $\epsilon$, for all $k\in \{0,\dots, b_i-1\}$
(that is, not at full capacity).  
This implies that $r_{i,b_i} = 1-\sum_{j=0}^{b_i-1} r_{i,k} = \Theta(1)$ for all $i$.
That is, the event that any link is not at full capacity is a rare event.
This implies that failure to meet the demand is a rare event, and therefore $\RE[I] \to \infty$ when $\epsilon\to 0$. 
More specifically, any state vector $\v x =(c_{1,k_1},\dots ,c_{m,k_m})$ for which $\Psi(\v x) < \dnet$ 
has probability $\bb P(\v X=\v x) = \prod_{i=1}^m r_{i,k_i} = \epsilon^{d(\v x)}(1 + o(1))$ for some $d(\v x) > 0$.
Let $d_{\min} = \min\{d(\v x) : \Psi(\v x) < \dnet\}$.  
Then 
$$
 u = \sum_{\{\v x: \Psi(\v x) < \dnet\}} \bb P(\v X=\v x) = \Theta(\epsilon^{d_{\min}}).
$$
On the other hand, we have

\begin{proposition}
In the setting just defined, the PMC estimator has BRE. 
\end{proposition}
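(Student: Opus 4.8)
The plan is to reduce the statement to Proposition~\ref{co:bre-pmc-gen}. That proposition already delivers BRE as soon as the rates are \emph{balanced}, i.e.\ $\lambda_{i,k_i}/\lambda_{j,k_j}=\Theta(1)$ for all $i,j,k_i,k_j$, or (via part~(i)) as soon as $p(\pi)=\Theta(1)$ for every permutation $\pi$. So the whole task is to check that, in the regime $r_{i,k}=a_{i,k}\epsilon^{d_{i,k}}$, every rate $\lambda_{i,k}$ is of order $\ln(1/\epsilon)$ with a fixed positive leading coefficient; pairwise ratios are then automatically $\Theta(1)$, hence $\Lambda_j=\Theta(\ln(1/\epsilon))$ and $\lambda_{\pi(j)}/\Lambda_j=\Theta(1)$ for all $j$, so $p(\pi)=\prod_j\lambda_{\pi(j)}/\Lambda_j=\Theta(1)$ and Proposition~\ref{co:bre-pmc-gen} closes the argument. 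I would also recall from the preceding paragraph that $u=\Theta(\epsilon^{d_{\min}})\to 0$, so that the claim is not vacuous.

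The computation of the rates uses the telescoping formulas \eqref{eq:lambdaibi}--\eqref{eq:lambdaik}. For the top level, $\lambda_{i,b_i}=-\ln(1-r_{i,b_i})=-\ln\!\big(\sum_{k<b_i}a_{i,k}\epsilon^{d_{i,k}}\big)=\delta_i\ln(1/\epsilon)+\mathcal{O}(1)$ with $\delta_i:=\min_{k<b_i}d_{i,k}>0$, so $\lambda_{i,b_i}=\Theta(\ln(1/\epsilon))$. For $1\le k\le b_i-1$ I would use the rewriting $\lambda_{i,k}=\ln\frac{r_{i,0}+\cdots+r_{i,k}}{r_{i,0}+\cdots+r_{i,k-1}}$ noted just after \eqref{eq:lambdaik}, together with $r_{i,0}+\cdots+r_{i,k}=\Theta(\epsilon^{e_{i,k}})$ where $e_{i,k}:=\min_{\ell\le k}d_{i,\ell}$, to get $\lambda_{i,k}=(e_{i,k-1}-e_{i,k})\ln(1/\epsilon)+\mathcal{O}(1)$. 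This is $\Theta(\ln(1/\epsilon))$ precisely when $e_{i,k-1}>e_{i,k}$, i.e.\ when the exponents $d_{i,k}$ are strictly decreasing in $k$ over $\{0,\dots,b_i-1\}$ — the natural shape of such a degradation regime, in which dropping one extra level is strictly rarer. All the resulting leading coefficients $e_{i,k-1}-e_{i,k}$ and $\delta_i$ then range over a fixed finite set of positive reals, which is exactly the balance needed, and nothing else about the phase-type estimator $W$ is used.

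The step that deserves care is precisely this balance, and it is where I expect the only real difficulty. If some $d_{i,k}$ is not strictly below $e_{i,k-1}$, then $\lambda_{i,k}$ collapses to $\Theta(1)$ while $\lambda_{i,b_i}$ stays $\Theta(\ln(1/\epsilon))$, so $\lambda_{i,k}/\lambda_{i,b_i}\to 0$; for a permutation that fires such a cheap jump while heavier jumps are still pending, $p(\pi)$ is then only $\Theta(1/\ln(1/\epsilon))$, and on a one-link three-level instance one checks that $\mathbb{E}[W^2]/u^2=\Theta(\ln(1/\epsilon))$, i.e.\ the estimator is not BRE. Thus the decreasing-exponents structure within each link is essential rather than a convenience, and the substantive content of the proof is to verify that the regime as defined does impose it; once $p(\pi)=\Theta(1)$ is in hand, Proposition~\ref{co:bre-pmc-gen} gives the conclusion at once.
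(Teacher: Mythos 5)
Your overall route is exactly the paper's: compute the rates from (\ref{eq:lambdaibi})--(\ref{eq:lambdaik}), argue that they are all $\Theta(\ln(1/\epsilon))$, and invoke Proposition~\ref{co:bre-pmc-gen}(ii). The paper's proof is precisely the two-line version of this argument: it asserts $\lambda_{i,b_i}=-\ln(\sum_{k<b_i}r_{i,k})=\Theta(\ln \epsilon)$ and $\lambda_{i,k}=\ln(\sum_{\ell\le k}r_{i,\ell})-\ln(\sum_{\ell\le k-1}r_{i,\ell})=\Theta(\ln\epsilon)$ for all $k<b_i$, and concludes. So on the question of approach you match the paper.

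The caveat you isolate is genuine, and it is the substantive part of your write-up: the claim $\lambda_{i,k}=\Theta(\ln\epsilon)$ for $1\le k\le b_i-1$ is correct only when $e_{i,k}:=\min_{\ell\le k}d_{i,\ell}$ is \emph{strictly} decreasing in $k$, and the setting as defined imposes no monotonicity on the exponents $d_{i,k}$. (Indeed, the paper's own numerical examples take $d_{i,k}\equiv 1$, for which $\lambda_{i,k}=\Theta(1)$ for $1\le k<b_i$ while $\lambda_{i,b_i}=\Theta(\ln(1/\epsilon))$, so the rates are not balanced there.) Your single-link counterexample is right in the equality case: with $b_i=2$, $d_{i,0}=d_{i,1}$ and demand equal to the top capacity, one gets $\lambda_{i,1}=\Theta(1)$, $p(\pi)=\Theta(1/\ln(1/\epsilon))$ for the permutation that fires jump $1$ first, $W(\pi)=\Theta(u\ln(1/\epsilon))$ for that permutation, and hence $\bb E[W^2]/u^2=\Theta(\ln(1/\epsilon))\to\infty$, so BRE fails and the proposition needs the strictly-decreasing-exponent hypothesis. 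Two minor inaccuracies in your discussion: when $d_{i,k}$ is \emph{strictly} larger than $e_{i,k-1}$, the rate $\lambda_{i,k}$ tends to $0$ polynomially rather than being $\Theta(1)$, so $p(\pi)$ is then much smaller than $\Theta(1/\ln(1/\epsilon))$; and in that strict sub-case a direct computation on the one-link instance gives $\bb E[W^2]/u^2\to 1$, so BRE actually survives --- the failure of the \emph{sufficient} condition of Proposition~\ref{co:bre-pmc-gen} does not by itself kill BRE, and your counterexample really does need equal (not merely non-decreasing) exponents. In summary: your proof is the paper's proof carried out carefully, and it correctly exposes that the published argument, as written, silently assumes a balance of the rates that the stated hypotheses do not guarantee.
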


\begin{proof}
Recall that for $1\leq i\leq m$, $\lambda_{i,b_i} = -\ln(\sum_{k=0}^{b_i-1} r_{i,k}) = \Theta(\ln(\epsilon))$.
Moreover, for all $k < b_i$, $\lambda_{i,k} = \ln(\sum_{\ell=0}^{k} r_{i,\ell}) - \ln(\sum_{\ell=0}^{k-1} r_{i,\ell}) = \Theta(\ln(\epsilon))$.
The conditions of Proposition~\ref{co:bre-pmc-gen} (ii) are then verified, hence the result.
\end{proof}

The results of this section apply  to the improved PMC variants as well;
the proofs are easily adapted.

\section{A generalized splitting algorithm}
\label{sec:gs}

Botev et al. \cite{vBOT14a} have explained how to adapt the GS algorithm 
proposed and studied in \cite{vBOT10a,vBOT13a} to the stochastic flow problem considered here,
but for the situation where the capacities have a continuous distribution.
The aim of the algorithm is to obtain a sample of realizations of $\v X$ which is approximately
a sample from the distribution of $\v X$ conditional on $\Psi(\v X) < \dnet$.
The estimator is then given by the realized sample size (which is random) 
divided by its largest possible value.
The algorithm uses intermediate demand levels $\dnet = d_\tau < \cdots < d_1 < d_0$,
where $d_0$ is the maximal possible flow, achieved when each link $i$ is at its maximal capacity
$c_{i,b_i}$.
These levels and their number $\tau$ are {fixed a priori and} chosen so that 
$\bb P[\Psi(\v X) < d_t \mid \Psi(\v X) < d_{t-1}] \approx 1/s$ for $t=1,\dots,\tau-1$,
and at most $1/s$ for $t=\tau$, where $s$ is a small integer also fixed 
(usually and in all our experiments in this paper, $s=2$). 
The levels are estimated by pilot runs, as explained in  \cite{vBOT10a,vBOT13a}.
The algorithm starts by sampling $\v X$ from its original distribution.
If $\Psi(\v X) < d_1$, it resamples each coordinate of $\v X$ conditional on $\Psi(\v X) < d_1$,
via Gibbs sampling, repeats this $s$ times, 
and keeps the states $\v X$ for which  $\Psi(\v X) < d_2$ 
(their number is in $\{0,\dots,s\}$).
At each level $t=3,\dots,\tau$, this type of resampling is applied to each state that has been retained
at the previous step (for which  $\Psi(\v X) < d_{t-1}$), by resampling that state twice from its 
distribution conditional on $\Psi(\v X) < d_{t-1}$, and retaining the states for which 
$\Psi(\v X) < d_{t}$.  At the last level, we count the number $N$ of chains for which 
$\Psi(\v X) < d_{\tau}$, and return $W = N/s^{\tau-1}$ as an estimator of $u$.
This is repeated $n$ times independently, to produce $n$ independent realizations of $W$,
say $W_1,\dots, W_n$, whose average $\bar W_n$ is an unbiased estimator of $u$.
This estimator \emph{does not} have BRE, because the RE increases with the number
of levels; the RE is typically (roughly) proportional to $-\log u$ 
(see \cite{vBOT12a} for a proof of this result in an idealized setting).
It can also handle large networks.

In general, this GS algorithm is not directly applicable when the capacities
have discrete distributions, because then $\Psi(\v X)$ also has a discrete distribution
and it may happen that this distribution is too coarse
(e.g., all the probability mass is on just a few possible values).
Then it may be impossible to select levels $d_t$ for which
$\bb P[\Psi(\v X) < d_t \mid \Psi(\v X) < d_{t-1}] \approx 1/s$ for $t=1,\dots,\tau-1$.

It is nevertheless possible to apply GS in that case by constructing the vector $\v Y$
as for PMC in the previous section, and resampling this vector instead of $\v X$.
Recall that $\v Y$ is a vector of $\kappa$ independent exponential random variables.
The GS algorithm will operate similarly as the one described above, except that now
the levels $0 = \gamma_0 < \gamma_1 < \cdots < \gamma_\tau = 1$ are on $T_C$,
and the resampling at each step $t$ is for $\v Y$ and is conditional on $T_C > \gamma_{t-1}$.
This is valid because $\{\Psi(\v X(\gamma_t)) < d_\mathrm{net}\} = \{T_C > \gamma_t\}$.
%
The corresponding GS procedure operates in the same way as the GS algorithm with anti-shocks in \cite{vBOT16m}.
We first generate $\v Y$ from its original distribution.
Then at each level $t$, we take each state (realization or modification of $\v Y$) that has been retained   
at the previous step (for which  $T_C > \gamma_{t-1}$), we resample all its coordinates $s$ times 
(i.e., for $s$ Gibbs sampling steps, where each step starts from the result of the previous step) 
from its distribution conditional on $T_C > \gamma_{t-1}$, to obtain two new states, 
and we retain the states for which $T_C > \gamma_{t}$.  
At the last level, we count the number $N$ of chains for which 
$T_C > \gamma_{\tau} = 1$, and return $W = N/s^{\tau-1}$ as an estimator of $u$.
The resampling of $\v Y$ conditional on $T_C > \gamma_{t-1}$ via Gibbs sampling
can be done in a similar way as in \cite{vBOT13a,vBOT16m}.
We first select a permutation $\pi$ of the $\kappa$ coordinates of the vector $\v Y$.
Then for $j = 1,\dots,\kappa$,  we resample $Y_{\pi(j)}$ as follows:
If $\pi(j) = (i,k)$, the current capacity $X_i(\gamma_{t-1})$ of link $i$ is less than $c_{i,k}$
(or equivalently $\min(Y_{i,k},\dots,Y_{i,b_i}) > \gamma_{t-1}$), 
and by changing the current capacity of link $i$ to $c_{i,k}$
(or equivalently changing $Y_{i,k}$ to 0) we would have $T_C < \gamma_{t-1}$ 
(the maximum flow would meet the demand), then we resample $Y_{i,k}$ from its exponential density 
truncated to $(\gamma_{t-1}, \infty)$. Otherwise we resample  $Y_{i,k}$ from its original exponential density. 
To sample from the truncated density, it suffices to generate $Y_{i,k}$ from the original density and 
add $\gamma_{t-1}$.


\section{Numerical Examples}
\label{sec:examples}

In this section, we provide some numerical examples that compare 
the PMC and GS algorithms, and show how they behave when $u\to 0$.
In these examples, we parameterize the models by $\epsilon$ in a way that 
$u = u(\epsilon) \to 0$ when  $\epsilon\to 0$, exactly as in Section~\ref{sec:bre}, 
{in the asymptotic regime when the probability that links are not operating at 
full capacity is getting close to zero}. 
For all variants of PMC, we used formula (2) in \cite{vBOT16m} with high precision arithmetic 
to compute (\ref{eq:sum-cdf}). 

\subsection{Experimental setting}

We used the same experimental protocol as in \cite{vBOT16m}, comparing four methods. Method PMC refers to Algorithm~\ref{algo:pmc-anti} without any filtering step. PMC-Single and PMC-All refer to PMC combined with the filtering as in Algorithms~\ref{algo:filterJumps1} and \ref{algo:filterJumps2} respectively. Method GS refers to generalized splitting, implemented as described in Section~\ref{sec:gs}. The splitting levels were determined via the adaptive Algorithm 3 of \cite{vBOT13a}, with $n_0 = 500$ and $s=2$. The levels were estimated using a single run of the adapative algorithm, and these same levels were used for every independent replication of the GS algorithm. 

For each example and method,  
we report the unreliability estimate $\bar W_n$,
its empirical \emph{relative error} $\RE[\bar W_n] = S_n/ (\sqrt{n} \bar W_n)$
where $S_n^2$ is the empirical variance,
and the \emph{work-normalized relative variance} (WNRV) of $\bar W_n$,
defined as $\WNRV[\bar W_{\!n}] = T \times \RE^2[\bar W_{\!n}]$,
where $T$ is the total CPU time (in seconds) for the $n$ runs of the algorithm.
One must keep in mind that $T$ and the WNRV depend on the software and hardware used for the computations.
The experiments were run on Intel Xeon E5-2680 CPUs, on a linux cluster. 
The sample size for every algorithm was $n = 5 \times 10^4$. 

For each example we use the following model. 
Each link $i$ has the capacity levels $\{0,1,\dots, b_i\}$, 
i.e., $c_{i,k} = k$ for $k=0,\dots,b_i$. We take $r_{i,k} = \bb P(X_i = c_{i,k}) = \rho^{b_i-k-1} \epsilon$ for $k < b_i$ and 
$r_{i,b_i} = 1 - \sum_{k=0}^{b_i-1} \rho^{b_i-k-1} \epsilon$,
where $\rho$, $\epsilon$ and $\left\{ b_i\right\}$ are model parameters.


\subsection{A $4 \times 4$ lattice graph}

Our first example uses the $4 \times 4$ lattice graph, which has $16$ nodes and $24$ links.
The flow has to be sent from one corner to the opposite corner.
We take $b_i = 8$, $\rho = 0.6$ and $\dnet = 10$, and let $\epsilon$ range from 
$10^{-4}$ to $10^{-13}$. 

Table~\ref{tab:grid-results_4} reports the values of 
$\bar W_n$, $\RE[\bar W_n]$ and $\WNRV[\bar W_{\!n}]$, for methods GS and PMC, for some values of $\epsilon$. 
Figure \ref{fig:wnrv_grid_4} shows plots of $\RE[\bar W_n]$ and $\WNRV[\bar W_{\!n}]$ for all four methods. 
We see that for this small example, PMC-All always has the smallest RE, followed by PMC-Single.
These REs increase very slowly when $\epsilon$ decreases, for the values we have tried.
They should eventually stabilize when $\epsilon\to 0$.
In terms of work-normalized relative variance, i.e., when taking the computing time into account,
GS is the most effective method when $\epsilon$ is not too small, but when $\epsilon$ gets smaller, 
GS requires a larger simulation effort while the effort required by PMC variants remains approximately stable, 
so these PMC methods eventually catch up, in agreement with our asymptotic results.
Among them, PMC-Single has the smallest WNRV.

\begin{table}[ht]
\centering
\begin{tabular}{l|rrrrr}
  \hline
 & $\varepsilon = 10^{-4}$ & $\varepsilon = 10^{-5}$ & $\varepsilon = 10^{-6}$ & $\varepsilon = 10^{-7}$ & $\varepsilon = 10^{-8}$ \\
  \hline
$\bar W_n$ for PMC & $2.99 \times 10^{-5}$ & $2.98 \times 10^{-6}$ & $2.99 \times 10^{-7}$ & $2.99 \times 10^{-8}$ & $2.99 \times 10^{-9}$ \\
  RE[$\bar W_{\!n}$] for PMC & $3.16 \times 10^{-2}$ & $3.34 \times 10^{-2}$ & $3.41 \times 10^{-2}$ & $3.69 \times 10^{-2}$ & $3.74 \times 10^{-2}$ \\
  WNRV[$\bar W_{\!n}$] for PMC & $3.17 \times 10^{-2}$ & $3.20 \times 10^{-2}$ & $3.17 \times 10^{-2}$ & $3.62 \times 10^{-2}$ & $3.54 \times 10^{-2}$ \\
  $\bar W_n$ for GS & $2.98 \times 10^{-5}$ & $2.99 \times 10^{-6}$ & $2.99 \times 10^{-7}$ & $2.98 \times 10^{-8}$ & $2.99 \times 10^{-9}$ \\
  RE[$\bar W_{\!n}$] for GS & $3.43 \times 10^{-2}$ & $3.32 \times 10^{-2}$ & $3.15 \times 10^{-2}$ & $3.30 \times 10^{-2}$ & $4.33 \times 10^{-2}$ \\
  WNRV[$\bar W_{\!n}$] for GS & $1.07 \times 10^{-2}$ & $1.43 \times 10^{-2}$ & $1.68 \times 10^{-2}$ & $2.35 \times 10^{-2}$ & $2.86 \times 10^{-2}$ \\
   \hline
\end{tabular}
\caption{Estimation of $u$, RE, and WNRV for some values of $\epsilon$, for the $4 \times 4$ lattice example}
\label{tab:grid-results_4}
\end{table}

\begin{figure}
\centering
\includegraphics[scale=0.45]{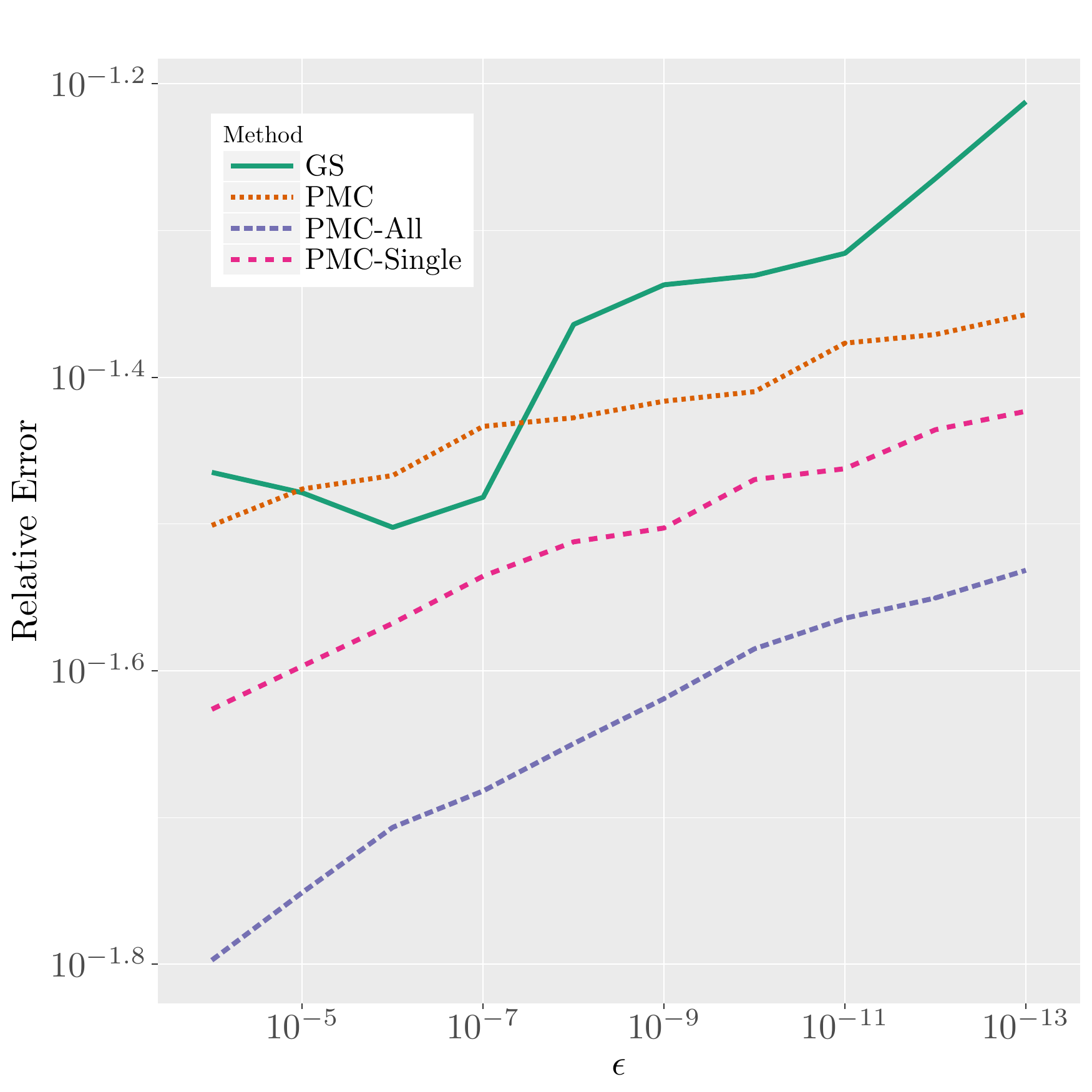} 
\hskip 10pt
\includegraphics[scale=0.45]{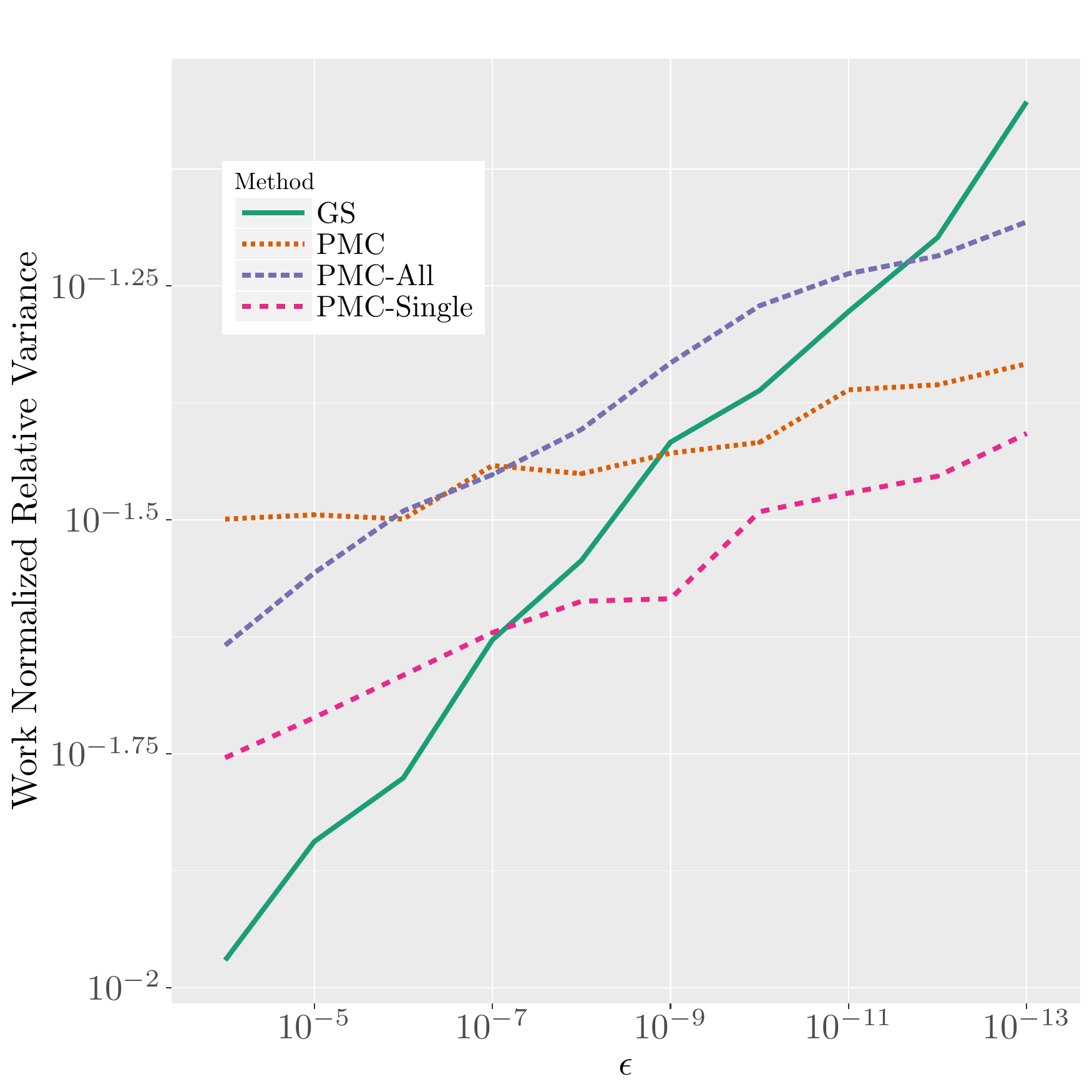}
\caption{RE (left) and WNRV (right) for four methods, for the $4 \times 4$ lattice graph. 
\label{fig:wnrv_grid_4}}
\end{figure}

\subsection{$6 \times 6$ lattice graph}

\begin{figure}
\centering
\includegraphics[scale=0.45]{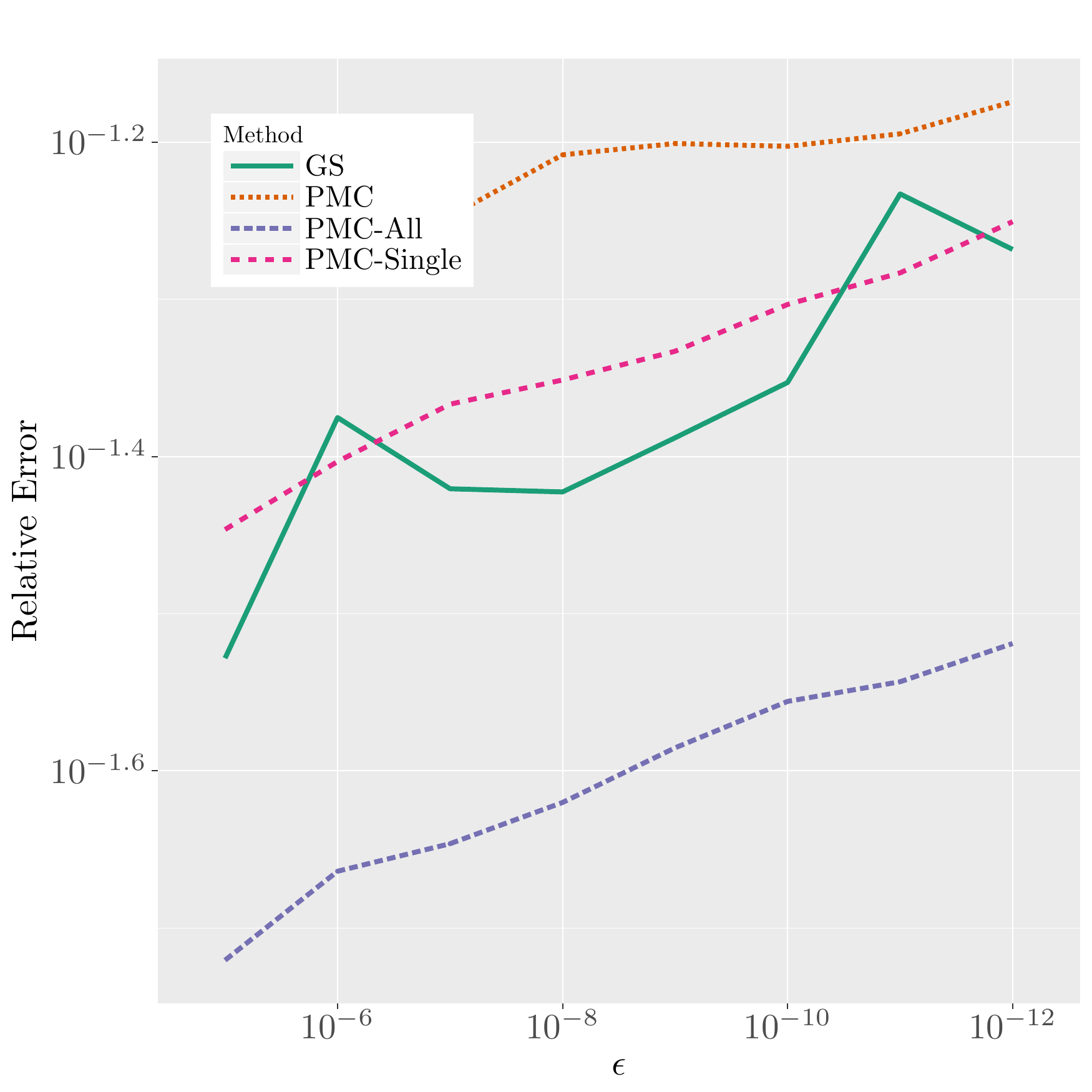} 
\hskip 10pt
\includegraphics[scale=0.45]{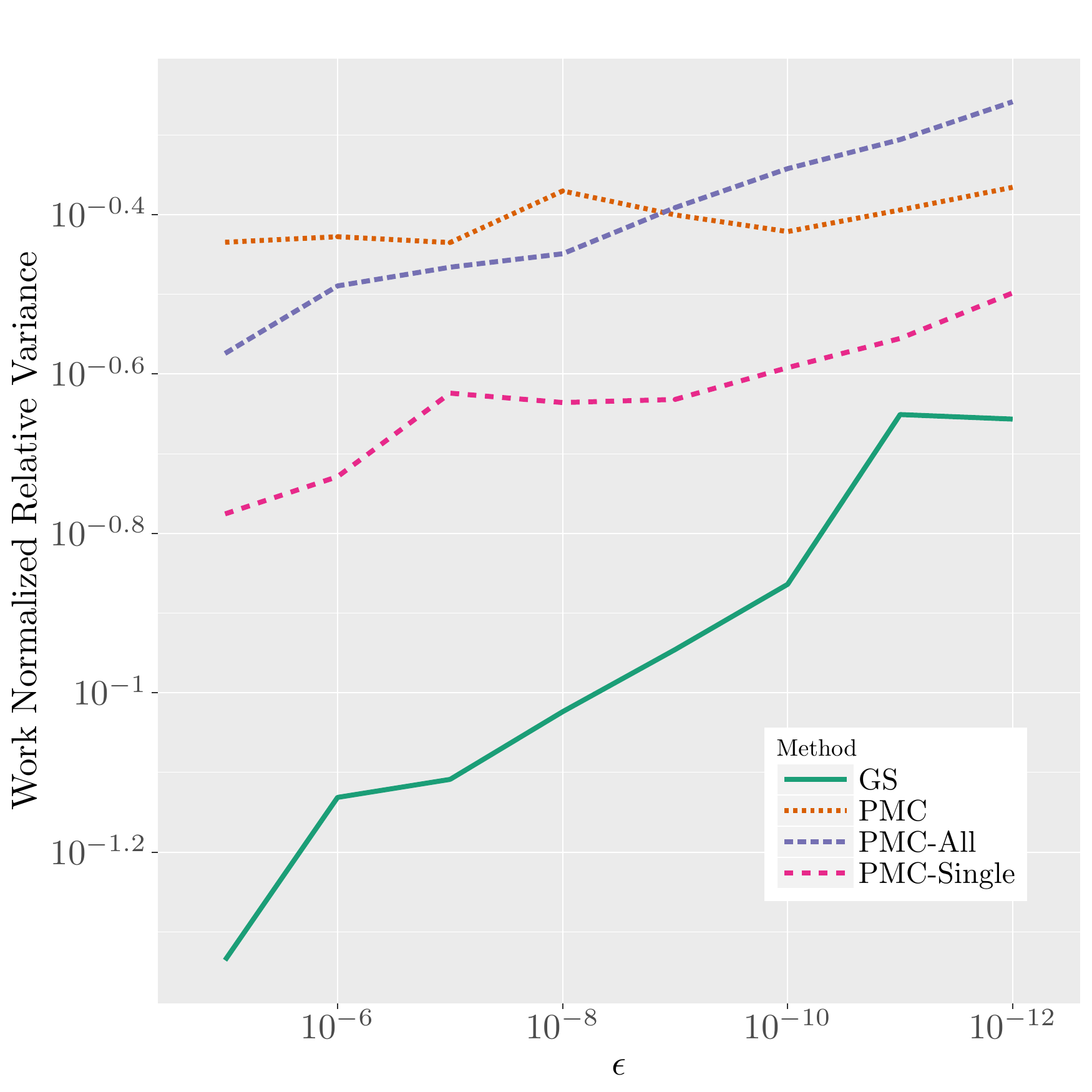}
\caption{RE (left) and WNRV (right) for four methods, for the $6 \times 6$ lattice graph. 
 \label{fig:wnrv_grid_6}}
\end{figure}

Figure \ref{fig:wnrv_grid_6} shows plots of $\RE[\bar W_n]$ and $\WNRV[\bar W_{\!n}]$ for a $6\times 6$ 
lattice graph, with 36 nodes and 60 links.  
Again, the flow has to be sent from one corner to the opposite corner.
Here, for all values of $\epsilon$ considered, PMC-All has the smallest RE while GS wins in terms of WNRV.

\subsection{A dodecahedron network}

\begin{figure}
\begin{center}
\includegraphics[scale=0.4]{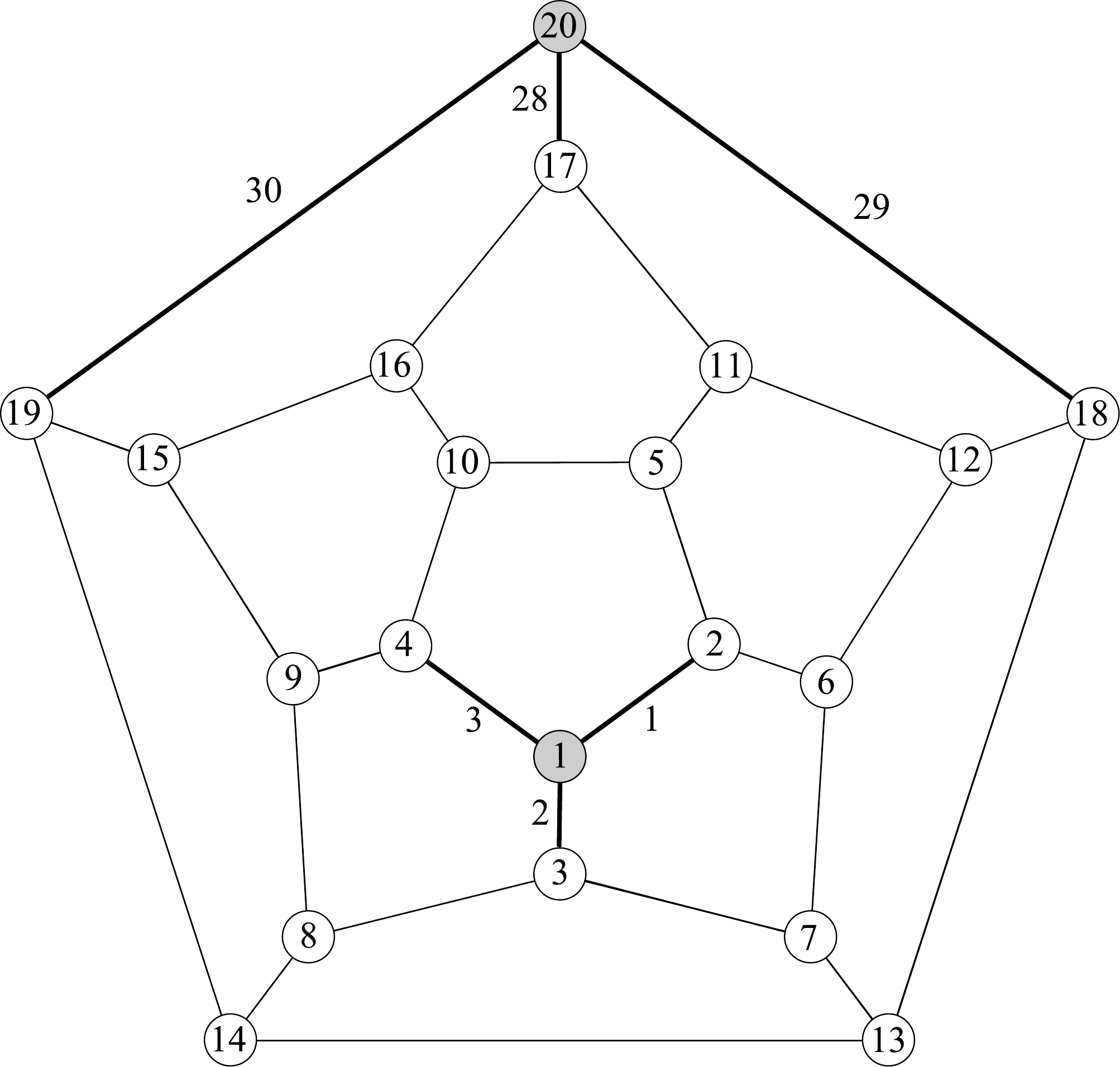}
\end{center}
\caption{A dodecahedron graph with 20 nodes and 30 links (figure taken from \cite{vBOT13a}).}
\label{fig:dode}
\end{figure}

In this example we use the well-known dodecahedron network (Figure~\ref{fig:dode}),
with 20 nodes and 30 links, often used as a standard benchmark in network reliability 
estimation \cite{vBOT13a,vCAN95a,vCAN09a,vCAN09b,vTUF14a}. 
Here we took $\rho = 0.7$, $b_i = 4$ and $d_{\mathrm{net}} = 5$. 
Note that when $\epsilon$ is very small, most of the failures will occur because there is 
not enough capacity in the three links connected to node 1 (links 1, 2, 3), 
or not enough capacity in the three links connected to node 20 (links 28, 29, 30).
These are the two bottleneck cuts.

Table~\ref{tab:dode-results} reports the values of 
$\bar W_n$, $\RE[\bar W_n]$, and $\WNRV[\bar W_{\!n}]$, for the GS and PMC methods, 
for different values of $\epsilon$. 
We see that the estimates $\bar W_n$ agree very well across the two methods.
Figure \ref{fig:wnrv_dodec} shows $\RE[\bar W_n]$ and $\WNRV[\bar W_{\!n}]$ as functions of $\epsilon$,
for all four methods.
We see that PMC-All has by far the smallest RE for all $\epsilon$, and it also wins in terms 
of WNRV, except for $\epsilon > 10^{-5}$ where GS wins.
The latter case is approximately when $u \ge 7\times 10^{-11}$, which is already pretty small. 
When $\epsilon$ decreases, the WNRV increases for GS in part because the RE increases,
but also because the computing time increases.
The figure shows what happens when $\epsilon$ gets very small.

\begin{table}[ht]
\centering
\begin{tabular}{l|rrrrr}
  \hline
 & $\varepsilon = 10^{-4}$ & $\varepsilon = 10^{-5}$ & $\varepsilon = 10^{-6}$ & $\varepsilon = 10^{-7}$ & $\varepsilon = 10^{-8}$ \\
  \hline
$\bar W_n$ for PMC & $7.08 \times 10^{-9}$ & $7.08 \times 10^{-11}$ & $7.06 \times 10^{-13}$ & $7.06 \times 10^{-15}$ & $7.05 \times 10^{-17}$ \\
  RE[$\bar W_{\!n}$] for PMC & $8.63 \times 10^{-2}$ & $7.21 \times 10^{-2}$ & $6.68 \times 10^{-2}$ & $5.97 \times 10^{-2}$ & $5.86 \times 10^{-2}$ \\
  WNRV[$\bar W_{\!n}$] for PMC & $1.85 \times 10^{-1}$ & $1.37 \times 10^{-1}$ & $1.14 \times 10^{-1}$ & $8.95 \times 10^{-2}$ & $8.57 \times 10^{-2}$ \\
  $\bar W_n$ for GS & $7.07 \times 10^{-9}$ & $7.06 \times 10^{-11}$ & $7.07 \times 10^{-13}$ & $7.07 \times 10^{-15}$ & $7.05 \times 10^{-17}$ \\
  RE[$\bar W_{\!n}$] for GS & $3.95 \times 10^{-2}$ & $4.30 \times 10^{-2}$ & $4.58 \times 10^{-2}$ & $5.17 \times 10^{-2}$ & $4.97 \times 10^{-2}$ \\
  WNRV[$\bar W_{\!n}$] for GS & $3.13 \times 10^{-2}$ & $4.52 \times 10^{-2}$ & $6.00 \times 10^{-2}$ & $8.81 \times 10^{-2}$ & $1.06 \times 10^{-1}$ \\
   \hline
\end{tabular}
\caption{Estimation of $u$, RE, and WNRV for some values of $\epsilon$, for the dodecahedron example}
\label{tab:dode-results}
\end{table}

\begin{figure}
\centering
\includegraphics[scale=0.45]{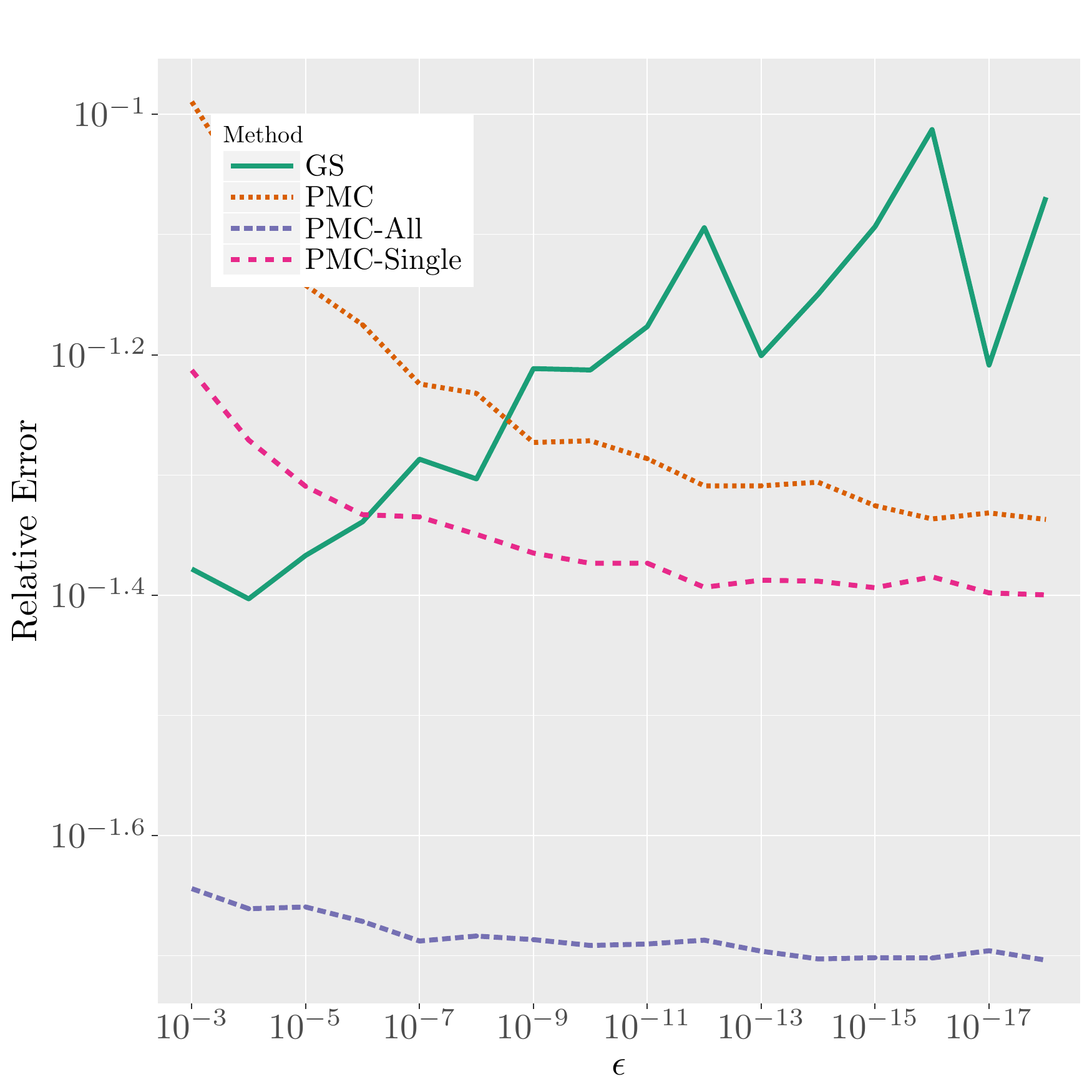} 
\hskip 10pt
\includegraphics[scale=0.45]{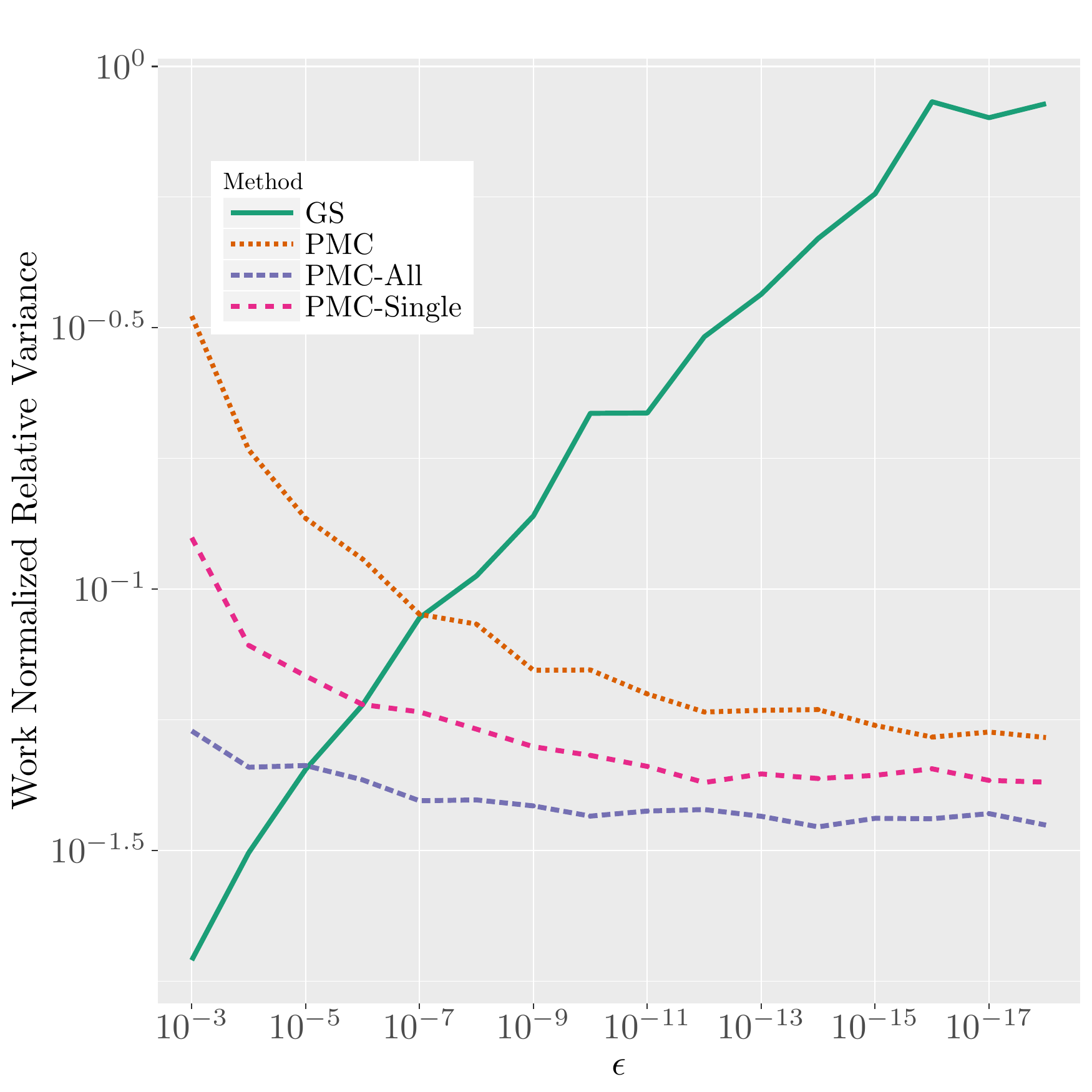}
\caption{RE (left) and WNRV (right) for four methods, for the dodecahedron example}
\label{fig:wnrv_dodec}
\end{figure}

\section*{Acknowledgments}

This work has been supported by the Australian Research Council under DE140100993 Grant to Z. I. Botev,
an NSERC-Canada Discovery Grant,
a Canada Research Chair, and an Inria International Chair to P. L'Ecuyer.
P. L'Ecuyer acknowledges the support of the Faculty of Science Visiting Researcher Award at UNSW.
We are grateful to Rohan Shah, who performed the numerical experiments.

\bibliographystyle{plain}

\end{document}